\documentclass[11pt]{article}
\usepackage{geometry}
\usepackage{amsmath,amssymb,amsthm,mathtools}
\usepackage{enumitem}
\usepackage{xcolor}
\usepackage{hyperref}
\usepackage[utf8]{inputenc}
\usepackage[T1]{fontenc}
\usepackage{caption}
\usepackage{color}
\usepackage{hyperref}
\usepackage{bookman}
\usepackage[T1]{fontenc}
\usepackage{tikz}
\usepackage{multirow}
\usepackage{dirtytalk}
\usepackage{lscape}
\usepackage{array} 
\usepackage{lmodern}
\usepackage{xcolor}
\usepackage{setspace}
\makeatletter
\newcommand{\mathleft}{\@fleqntrue\@mathmargin0pt}
\makeatother
\usepackage{nccmath} 
\usepackage{graphicx}
\usepackage{eurosym}
\usepackage{comment}
\usepackage{natbib}
\usepackage{booktabs}

\definecolor{craneorange}{RGB}{0,102,131}
\definecolor{craneblue}{RGB}{0,0,0}
\definecolor{ticketswapblue}{RGB}{0,174,221}

\hypersetup{
    colorlinks=true,
    linkcolor=blue!60!black,
    citecolor=blue!60!black,
    urlcolor=blue!60!black
}

\newtheorem{definition}{Definition}
\newtheorem{remark}{Remark}

\newtheorem{theorem}{Theorem}
\newtheorem{example}{Example}

\newcommand{\Prefs}{\mathcal P}
\newcommand{\Allocs}{\mathcal X}
\newcommand{\Supergroups}{L}
\newcommand{\set}[1]{\left\{#1\right\}}
\newcommand{\Endowment}{e}
\title{Non-obvious Manipulability with Groups in Shapley--Scarf Housing Markets}
\author{Louise Demoor\footnote{CREST, Ecole Polytechnique, Institut Polytechnique de Paris, email: \href{mailto:louise.demoor@polytechnique.edu}{louise.demoor@polytechnique.edu}}
\and Martí Jané-Ballarín\footnote{University of Barcelona, 
email: \href{mailto:marti.jane@ub.edu}{marti.jane@ub.edu}}
\and Pierre Nunn\footnote{Université Jean Monnet, Saint-Étienne, GATE-LSE, email: \href{mailto:pierre.nunn@univ-st-etienne.fr}{pierre.nunn@univ-st-etienne.fr}}
\and Subhajit Pramanik\footnote{University of Padova, 
email: \href{mailto:subhajit.pramanik@phd.unipd.it}{subhajit.pramanik@phd.unipd.it}}
\and Antoine Prévotat\footnote{Université Jean Monnet, Saint-Étienne, GATE-LSE, email: \href{mailto:antoine.prevotat@univ-st-etienne.fr} {antoine.prevotat@univ-st-etienne.fr}}
\and Makoto Yokoo\footnote{Kyushu University, email:  \href{mailto:yokoo@kyudai.jp}{yokoo@kyudai.jp}}}

\begin{document}
\maketitle

\begin{abstract}
In Shapley--Scarf housing markets, \cite{ma1994} shows that top trading cycles (TTC) is the unique mechanism satisfying individual rationality (IR), Pareto efficiency (PE), and strategy-proofness. We ask what other mechanisms become possible when strategy-proofness is replaced by a weaker condition called non-obvious manipulability (NOM), introduced by \cite{troyanmorrill2020}. 
We first show that this weaker condition does not help on its own: every IR and PE mechanism is already NOM. We therefore introduce a new condition: NOM with groups, under which each agent knows the preferences of the other members of her group, but not those of agents outside the group. 
This condition reduces to strategy-proofness when all agents belong to one group, and to standard NOM when every group is a singleton. 
Also, we introduce a participation condition called \emph{group rationality} (GR), which requires that no group do worse than it would by trading only among its own members. 
We then define a class of mechanisms called \emph{TTC with super-groups},
whose members satisfy GR, PE, and NOM with groups. The class includes mechanisms that differ from 
standard TTC, including mechanisms that are not strategy-proof.
Furthermore, we show that every mechanism that
satisfies GR, PE, and NOM with groups
has the same best- and worst-case outcomes as every TTC with super-groups mechanism. 
\end{abstract}

\newpage
\section{Introduction}
\label{sec:introduction}

The Shapley--Scarf housing market is a canonical model of exchange without money, in which agents initially own indivisible houses that are reallocated according to their preferences \citep{shapley1974}. Its central mechanism, top trading cycles (TTC), selects the unique core allocation, is strategy-proof \citep{roth1982}, and, by \citet{ma1994}, is the unique mechanism satisfying individual rationality (IR), Pareto efficiency (PE), and strategy-proofness. This uniqueness leaves little room for mechanism variation if one insists on strategy-proofness. A recent line of work asks how far Ma's axioms can be relaxed: \citet{ekici2024,ekicisethuraman2024} weaken PE to pair efficiency, which rules out only mutually beneficial bilateral swaps, and TTC remains the unique rule. We instead retain IR and PE and weaken the incentive requirement.

 In many applications, strategy-proofness is stronger than what is behaviorally or informationally justified. Agents may not know the preferences of everyone in a large market, nor be able to reason through all possible reports of all other agents. This concern motivates the literature on simplicity for boundedly rational agents: \citet{li2017} argues that agents may be unable to reason through all profiles of others' reports and introduces obvious strategy-proofness, while \citet{troyanmorrill2020} move in the opposite direction with non-obvious manipulability (NOM), under which a manipulation is obvious only if it improves the agent's best- or worst-case outcome as others' reports vary. The idea is that an agent with limited information can readily identify only manipulations that improve one of these extreme outcomes.

Our first result shows that, in Shapley--Scarf markets, NOM is too weak: every IR and PE mechanism is already NOM (Theorem~\ref{thm:irpe-nom}). The endowment pins down the truthful worst case, while PE makes the favorite house attainable in the best case, so a misreport can improve neither extreme. This is peculiar to markets with endowments; in assignment problems without endowments, NOM does bind the designer, and \citet{troyan2024} finds that rank-minimizing rules are NOM only under an extra condition. Replacing strategy-proofness by NOM alone therefore imposes no additional restriction on our domain, and a stronger notion is needed.

We therefore introduce \emph{NOM with groups}. Agents are partitioned into exogenously given groups. When agent $i$ belongs to group $g$, the best and worst outcomes used to evaluate a possible misreport are computed by holding the reports of the other agents in $g$ fixed and varying only the reports of agents outside $g$. This condition interpolates between familiar requirements. If 
all agents belong to a single group, then there are no outside reports to vary, so NOM with groups becomes strategy-proofness. If every group is a singleton, it becomes ordinary NOM. For intermediate partitions, it captures a form of non-obvious manipulability under local information. It should be distinguished from group-based concepts involving coordinated deviations, such as group strategy-proofness, which rules out profitable joint misreports and also characterizes TTC \citep{bird1984,takamiya2001}, or the obvious group manipulations of \citet{bonifacio2026groups}. In our model, agents deviate individually: the group is informational, determining which other agents' reports are treated as known and fixed.

We also introduce \emph{group rationality} (GR): each agent must receive a house weakly better than her assignment in the core allocation of her own group. GR is motivated by kidney exchange, a natural instance of the Shapley--Scarf market in which a patient--donor pair is an agent and its incompatible donor is the endowment \citep{rsu2004}. As exchange programs expanded across hospitals and countries, hospitals began withholding their easy-to-match pairs and sharing only hard cases. \citet{ashlagiroth2014} show that full participation can be restored only if each hospital is guaranteed at least what it could achieve alone, and \citet{afkp2015} design a mechanism giving hospitals the incentive to report all their pairs. GR is the theoretical counterpart of that guarantee: no group is left worse off than trading only among its own members, so a hospital or country never loses by joining the global pool.

Finally, we identify mechanisms that satisfy the above intermediate incentive condition as well 
as the new participation condition. We define \emph{TTC with super-groups}. A super-group is a union of base groups, and the mechanism repeatedly applies TTC to minimal not-yet-selected super-groups, with each base group and the grand coalition included. This staged structure resembles the hierarchical and segmented exchange rules of \citet{papai2000,papai2003}, though our super-groups represent successive levels of market integration and are evaluated through GR and NOM with groups rather than full strategy-proofness. We show that every TTC with super-groups mechanism satisfies GR, PE, and NOM with groups (Theorem \ref{thm:ttc-with-super-groups}). These mechanisms illustrate that weakening strategy-proofness to NOM with groups makes room for mechanisms beyond standard TTC while preserving the basic efficiency property, 
as well as a stronger participation property. 
We also show that all TTC with super-groups mechanisms are envelope-equivalent (Theorem \ref{thm:envelope}): for every agent and within-group profile, they induce the same best and worst possible houses. Our main result gives the converse at the level of envelopes: every mechanism satisfying GR, PE, and NOM with groups induces this same envelope (Theorem \ref{thm:characterization}). Thus, the axioms characterize the extreme outcomes relevant for NOM with groups, though not the full allocation rule.

The rest of the paper is organized as follows. 
Section~\ref{sec:model} introduces the model, Section~\ref{sec:NOM-groups} develops NOM with groups, and 
Section~\ref{sec:supergroups} introduces TTC with super-groups and establishes the envelope characterization.

\section{Model}
\label{sec:model}
There is a finite set of agents $N=\set{1,2,\ldots,n}$ and a finite set of objects, which we call houses, $H$, with $|H|=|N|$. Each agent $i\in N$ is initially endowed with a house $\Endowment_i\in H$, where the initial endowment map $\Endowment:N\to H$ is a bijection. Every agent $i$ has a strict preference relation $\succ_i$ over houses, and we let $\Prefs$ denote the set of all strict preferences over $H$. We write $h\succeq_i h'$ if either $h\succ_i h'$ or $h=h'$ holds. A preference profile is a vector $\succ\,=(\succ_i)_{i\in N}\in\Prefs^N$. For a subset $S\subseteq N$ we write $\succ_S\,=(\succ_i)_{i\in S}$, and we let $\succ_{-i}$ denote a preference profile of all agents except $i$.

An allocation is a bijection $x:N\to H$; the house assigned to agent $i$ under allocation $x$ is denoted by $x_i$. Let $\Allocs$ denote the set of all allocations. A deterministic mechanism is a function $f:\Prefs^N\to\Allocs$, and $f_i(\succ)$ denotes the house assigned to agent $i$ at declared preference profile $\succ$.

Throughout, we will be concerned with the set of houses an agent has an opportunity to obtain as the reports of the remaining agents vary. For a mechanism $f$ and a declared preference $\succ_i$, let
\[
    O^f_i(\succ_i)=\set{f_i(\succ_i,\succ_{-i}) \mid \succ_{-i}\,\in\Prefs^{N\setminus\set{i}}}
\]
denote this set. Given a preference $\succ'_i$ used to rank its elements, let $b^f_i(\succ'_i;\succ_i)$ denote the best house in $O^f_i(\succ_i)$ based on $\succ'_i$; that is, $b^f_i(\succ'_i;\succ_i)=h$ such that $h\in O^f_i(\succ_i)$ and $h\succ'_i h'$ holds for any $h'\in O^f_i(\succ_i)\setminus\set{h}$. Similarly, let $w^f_i(\succ'_i;\succ_i)$ denote the worst house in $O^f_i(\succ_i)$ based on $\succ'_i$; that is, $w^f_i(\succ'_i;\succ_i)=h$ such that $h\in O^f_i(\succ_i)$ and $h'\succ'_i h$ holds for any $h'\in O^f_i(\succ_i)\setminus\set{h}$.

\subsection{Basic Axioms}

\begin{definition}[Individual Rationality (IR)]
We say house $h$ is acceptable for agent $i$, if 
$h \succeq_i \Endowment_i$ holds. 
A mechanism $f$ satisfies individual rationality (IR) if
$f_i(\succ)$ is acceptable for $i$ for every $\succ\in\mathcal P^N$
and every $i\in N$.
\end{definition}

\begin{definition}[Pareto Efficiency (PE)]
An allocation $x\in\mathcal X$ is Pareto dominated at profile $\succ$ if there exists
an allocation $y\in\mathcal X$ such that
\[
    y_i\succeq_i x_i \quad \text{for all } i\in N,
\]
and
\[
    y_j\succ_j x_j \quad \text{for some } j\in N.
\]
A mechanism $f$ satisfies Pareto efficiency (PE) if, for every profile
$\succ\in\mathcal P^N$, $f(\succ)$ is not Pareto dominated at $\succ$.
\end{definition}
\begin{definition}[Strategy-proofness (SP)]
We say that a mechanism $f$ is strategy-proof (SP)
if for every $i \in N$, every $\succ_i \in \Prefs$, and every $\succ_{-i} \in \Prefs^{N-1}$, 
there is no possible deviation $\succ'_{i}\in \Prefs$ such that $f_i\left(\succ_{i}',\,\succ_{-i}\right)
\succ_{i}f_i\left(\succ_{i}, \succ_{-i}\right)$.    
\end{definition}

\begin{definition}[Non-obvious manipulability (NOM)]
      A manipulation $\succ_{i}'$ is \emph{obvious} if, whatever the other agents may report as their preferences, either the best or worst outcomes are strictly better for $i$ than if $i$ had reported her true preference $\succ_{i}$. 
      We say mechanism $f$ is NOM if there exists no obvious manipulation. Formally, mechanism $f$ is NOM if, for any $i\in N$, $\succ_i \in \Prefs$, and $\succ'_i \in \Prefs$, both $b^f_i(\succ_i;\succ_i) \succeq_i b^f_i(\succ_i;\succ'_i)$ and $w^f_i(\succ_i;\succ_i) \succeq_i w^f_i(\succ_i;\succ'_i)$ hold. 
\end{definition}

\section{Non-obvious Manipulability with Groups}
\label{sec:NOM-groups}
Our first theorem shows that IR and PE imply NOM. 

\begin{theorem}
\label{thm:irpe-nom}
Any mechanism $f$ that is IR and PE is NOM.
\end{theorem}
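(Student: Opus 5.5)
We verify the two NOM inequalities separately, fixing an agent $i$, a true preference $\succ_i$, and an arbitrary report $\succ'_i$. The introduction already indicates the idea: the truthful worst case is pinned down by the endowment, while the truthful best case is $i$'s favorite house. Since $e_i$ and the top house are the extreme elements of $\succ_i$ among all houses $i$ could possibly receive against a misreport, neither extreme can be improved.

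\emph{Best case.} Let $h^*$ be $i$'s top house under $\succ_i$. We will show $h^*\in O^f_i(\succ_i)$, so that $b^f_i(\succ_i;\succ_i)=h^*$, which weakly beats anything under $\succ_i$. If $h^*=e_i$, IR alone gives $f_i(\succ)=e_i$ for every profile. Otherwise let $j$ be the owner of $h^*$ and construct $\succ_{-i}$ as follows: $j$ ranks $e_i$ first, and every $k\neq i,j$ ranks $e_k$ first. Then IR forces each $k\neq i,j$ to receive $e_k$, since $e_k$ is $k$'s only acceptable house. Hence $i$ and $j$ divide $\{e_i,h^*\}$ between them. The swap in which $i$ gets $h^*$ and $j$ gets $e_i$ gives both agents their top house, so it is the unique allocation not Pareto dominated at this profile. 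By PE, $f_i(\succ_i,\succ_{-i})=h^*$. This step is routine; the only care needed is the degenerate case $h^*=e_i$.

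\emph{Worst case.} We will show $w^f_i(\succ_i;\succ_i)=e_i$ and that $e_i$ is also (weakly) attainable as a worst outcome under the misreport. Under truthful reporting, IR gives $f_i(\succ_i,\succ_{-i})\succeq_i e_i$ for all $\succ_{-i}$. Moreover, $e_i\in O^f_i(\succ_i)$: take $\succ_{-i}$ in which every other agent ranks her own endowment first. Then IR fixes all others to their endowments, leaving $e_i$ for $i$. So the truthful worst case is exactly $e_i$. Under $\succ'_i$, the same profile $\succ_{-i}$ again forces every other agent to keep her endowment by IR, so $e_i\in O^f_i(\succ'_i)$. Therefore $w^f_i(\succ_i;\succ'_i)\preceq_i e_i$, and $e_i=w^f_i(\succ_i;\succ_i)\succeq_i w^f_i(\succ_i;\succ'_i)$. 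Note that IR is applied with respect to reported preferences, but this is harmless because we only use the other agents' reports.

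The main subtlety is this last point: IR for $i$ holds only with respect to the reported $\succ'_i$, so one cannot say the misreport guarantees at least $e_i$ under the true preference. The argument avoids that by showing $e_i$ is always attainable by constraining only the \emph{other} agents' reports. Combining the two parts yields both NOM inequalities for every $i$, $\succ_i$, and $\succ'_i$.
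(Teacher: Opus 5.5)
Your proof is correct and follows essentially the same route as the paper. You use the same construction to pin down $h^*$ as the truthful best case: the owner of $h^*$ wants only $e_i$ and everyone else wants only her endowment, so IR and PE force the swap. You also use the same all-endowment profile, which puts $e_i$ in both $O^f_i(\succ_i)$ and $O^f_i(\succ'_i)$ and, by IR, makes $e_i$ the truthful worst case; the paper makes your point that only the other agents' reports constrain this outcome with the phrase ``regardless of the preference of agent $i$''.
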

\begin{proof}
    Fix any $i \in N$ and any $\succ_i, \succ'_i \in \Prefs$.
    Let $h^*$ denote $i$'s best house under $\succ_i$. 
    If $h^*=\Endowment_i$, from IR, 
    $O^f_i(\succ_i) = \{\Endowment_i\}$ holds, 
    since $\Endowment_i$ is the only house that satisfies IR. 
    Thus, $b^f_i(\succ_i;\succ_i)=w^f_i(\succ_i;\succ_i)=\Endowment_i$ holds. 
    Since $i$ always obtains the best house by truth-telling, there is no obvious 
    manipulation. 

    Assume $h^*\neq \Endowment_i$. Let $j\, (\neq i)$ be the initial owner
    of $h^*$, i.e., $\Endowment_j = h^*$ holds. 
    Let us construct $\succ_{-i}$ s.t., 
    $j$ prefers $\Endowment_i$ the most, and other houses (except $\Endowment_j$) are
    unacceptable. Also, for each agent $k \in N \setminus \set{i,j}$,
    she prefers her initial endowment the most, and other houses are unacceptable.
    Then, the only allocation that satisfies IR and PE is that $i$ and $j$ exchange their initial endowments, and other agents keep their initial endowments: the all-endowment allocation is IR but Pareto dominated by this swap (both $i$ and $j$ strictly gain), and any allocation moving a house of an agent $k \neq i,j$ violates IR since $e_k$ is $k$'s unique acceptable house. Thus, $h^* \in O^f_i(\succ_i)$ holds.
    Since $h^*$ is the best house for $i$, $b^f_i(\succ_i;\succ_i)=h^*$ holds. Then, no manipulation can 
    improve the best case. 
    
    Let us construct another $\succ_{-i}$ s.t., 
    for each agent $k (\neq i)$, 
    she prefers her initial endowment the most, and other houses are unacceptable.
    Under this $\succ_{-i}$, 
    the only allocation that satisfies IR is the one in which each agent gets her initial endowment regardless of the preference of agent $i$.
    Thus, $\Endowment_i \in O^f_i(\succ_i)$ and $\Endowment_i \in O^f_i(\succ'_i)$ hold, 
    where $\succ'_i$ is any possible manipulation. 
    Since $f$ is IR, each house in $O^f_i(\succ_i)$ must be acceptable for $i$.
    Thus, $w^f_i(\succ_i;\succ_i)=\Endowment_i$ holds. 
    Also, since $\Endowment_i \in O^f_i(\succ'_i)$, 
    $\Endowment_i = w^f_i(\succ_i;\succ_i) \succeq_i w^f_i(\succ_i;\succ'_i)$ holds.
\end{proof}

The above theorem implies that NOM is too weak as a reasonable requirement for
Shapley--Scarf markets; any mechanism that satisfies very basic requirements 
(i.e., IR and PE) automatically satisfies it. 
Thus, we introduce a stronger incentive property based on agent groups. 

\begin{definition}[Groups]
We assume agents are partitioned into disjoint groups
$G = \set{g_1, \ldots, g_{\ell}}$, 
where $g_k\cap g_{k'}=\emptyset$ for $k\neq k'$ and
$\bigcup_{g\in G}g=N$.
We refer to the elements of $G$ as base groups. 
\end{definition}
Fix $g \in G$ and $i \in g$. 
Let $\succ_{g-i}$ denote the preference profile of agents in $g$ except for $i$, and 
let $\succ_{-g}$ denote the preference profile of agents outside of group $g$.

For any $\succ_i$, $\succ_{g-i}$, where $\succ_g = (\succ_i, \succ_{g-i})$,
let $O^f_i(\succ_g)$ denote the set of houses that $i$ has a chance to obtain
under $\succ_g$ and mechanism $f$, i.e., 
$O^f_i(\succ_g) = \set{f_i(\succ_g, \succ_{-g}) \mid \ \succ_{-g} \in \Prefs^{N\setminus g}}$.
Furthermore, let $b^f_i(\succ'_i;\succ_g)$ denote the best house in 
$O^f_i(\succ_g)$ based on $\succ'_i$, i.e., $b^f_i(\succ'_i;\succ_g)$ $=h$ s.t.\  
$h\in O^f_i(\succ_g)$ and $h\succ'_i h'$ holds for any $h'\in O^f_i(\succ_g)\setminus 
\{h\}$. 
Similarly, let $w^f_i(\succ'_i;\succ_g)$ denote the worst house in 
$O^f_i(\succ_g)$ based on $\succ'_i$, i.e., $w^f_i(\succ'_i;\succ_g)=h$ s.t.\  
$h\in O^f_i(\succ_g)$ and $h'\succ'_i h$ holds for any $h'\in O^f_i(\succ_g)\setminus 
\{h\}$. These notations extend the corresponding notations for standard NOM. 

\begin{definition}[NOM with Groups]
A mechanism $f$ is not obviously manipulable with groups, or satisfies NOM with
groups, if $\forall g\in G$, $\forall i\in g$, 
$\forall \succ_i\in\Prefs$, $\forall \succ'_i\in\Prefs$, and $\forall \succ_{g-i}\in\Prefs^{g\setminus \set{i}}$,
\[
    b_i^f(\succ_i;(\succ_i,\succ_{g-i}))
    \succeq_i
    b_i^f(\succ_i; (\succ'_i,\succ_{g-i})),
\]
and
\[
    w_i^f(\succ_i;(\succ_i,\succ_{g-i}))
    \succeq_i
    w_i^f(\succ_i;(\succ'_i,\succ_{g-i}))
\] hold.
\end{definition}

In words, a mechanism satisfies NOM with groups if an agent can never strictly improve their best-case or worst-case outcomes by misreporting their preferences, assuming they know the exact reports of their own group members but remain completely uncertain about the reports of outsiders.

\begin{remark}
If $G=\{N\}$, then the reports outside the group are absent and the above
condition coincides with strategy-proofness. If every group is a singleton,
i.e., $\forall g \in G, |g|=1$ holds,  
the
condition coincides with the standard notion of non-obvious manipulability.
\end{remark}

\section{New Class of Mechanisms: TTC with super-groups}
\label{sec:supergroups}

\begin{definition}[Super-groups]
We say that a family of subsets of agents $\Supergroups \subseteq 2^{N}\setminus \{\emptyset\}$ is a family of super-groups if the following conditions hold:\\
    (i) for any $\ell \in \Supergroups$, there exists $G'\subseteq G$ s.t.\ 
    $\ell = \bigcup_{g \in G'} g$.\\
    (ii) for any $g \in G$, $g\in L$.\\
    (iii) $N\in \Supergroups$. 
\end{definition}

Super-groups allow the mechanism to merge base groups together in stages, expanding the pool of agents who can trade with one another until everyone is included.

\begin{definition}[TTC with super-groups]
    TTC with super-groups $\Supergroups$ is defined as follows:
    \begin{enumerate}
        \item If all elements in $\Supergroups$ are already selected, return the current allocation.
        \item Choose $\ell \in \Supergroups$ s.t.\ it is not selected yet, and there exists no $\ell' \in \Supergroups$ s.t.\ $\ell'$ is not selected yet and $\ell' \subsetneq \ell$. Apply TTC for agents in $\ell$, treating their currently held houses as their endowments. Go to 1. 
    \end{enumerate}
\end{definition}
\begin{remark}
If several super-groups can be chosen at Step~2, we can apply any tie-breaking rule, which is independent of agents'
declared preferences. Each choice of
$\Supergroups$ and a tie-breaking rule defines a deterministic TTC with super-groups mechanism. 
We use TTC with super-groups to refer to the class of all mechanisms obtained in this way. 
\end{remark}

Let us show an example. 
\begin{example}
\label{ex:ttc-with-super-groups}
Assume $N=\set{1,2,3}$, $G=\set{\{1,2\},\{3\}}$, $\Supergroups=\set{\{1,2\}, \{3\}, \{1,2,3\}}$, $e_i = h_i$ for each $i \in N$. 
Suppose $h_3 \succ_1 h_2 \succ_1 h_1$, $h_3 \succ_2 h_1 \succ_2 h_2$, 
and $h_2 \succ_3 h_1 \succ_3 h_3$. 

First, the mechanism applies TTC for $\{1,2\}$. Then, 
they swap their initial endowments. 
By applying TTC to $\{3\}$, 
agent~3 keeps her initial endowment. 
Then, the mechanism applies TTC globally. 
Agents 1 and 3 swap their current endowments. 
The final allocation is: $(h_3, h_1, h_2)$. 
\end{example}

\begin{remark}
TTC with super-groups is not strategy-proof. 
In Example~\ref{ex:ttc-with-super-groups}, 
assume agent~2  misreports her preference as:
$h_3 \succ'_2 h_2 \succ'_2 h_1$. Then, there is no swap between agents 1 and 2 
when applying TTC to $\{1, 2\}$.
The final allocation is: $(h_1, h_3, h_2)$. 
Thus, agent 2 obtains a better house by this misreport. 
\end{remark}

Let us introduce another participation condition, which we call 
group rationality. 
\begin{definition}[Group Rationality (GR)]
    For group $g$, let $c(\succ_g)$ denote the core allocation when only agents in $g$ are in the market. 
    We say that a mechanism $f$ satisfies GR if for every $\succ \in \Prefs^N$, every $g \in G$, and every $i \in g$, $f_i(\succ) \succeq_i c_i(\succ_g)$ holds.
\end{definition}

\begin{remark}
    In the Shapley--Scarf market, the core is unique and identical to the TTC allocation \citep{roth1977weak, shapley1974}.
\end{remark}
\begin{remark}
GR implies IR. 
\end{remark}

\begin{remark}
GR is a reasonable requirement when each group can decide whether to join the 
global integrated market, e.g., a hospital/region/country can decide 
whether to join the global organ exchange network \citep{ashlagiroth2014, afkp2015}.
\end{remark}

\begin{remark}
If $f$ satisfies NOM with groups, many variants of $f$ do as well. 
Fix $\succ_g$ and suppose that, as $\succ_{-g}$ varies, $f$ yields only two allocations $x$ and $y$ for the agents in $g$. Let us modify $f$ at a single profile so that it returns $y$ instead of $x$.
As long as every best and worst case is unchanged for every $i$ and $g$, 
the modified mechanism also satisfies NOM with groups.
\end{remark}

Since there exist many almost equivalent mechanisms, 
we classify mechanisms using the following equivalence relation. 

\begin{definition}[Envelope equivalence]
We say mechanisms $f$ and $f'$ are envelope equivalent 
if $\forall g$, $\forall i\in g$, $\forall \succ_{g} \in \Prefs^g$,
$b^f_i(\succ_i;\succ_{g}) = b^{f'}_i(\succ_i;\succ_{g})$ and 
$w^f_i(\succ_i;\succ_{g}) = w^{f'}_i(\succ_i;\succ_{g})$ 
hold.
\end{definition}

Simply put, two mechanisms are envelope equivalent if they offer every agent exactly the same best-case and worst-case scenarios for any given preference profile of their group, meaning the extreme boundaries of their opportunity sets remain identical even if the mechanisms allocate houses differently in intermediate cases.

\begin{remark} \label{rmk:envelope-not-nom}
Even if $f$ and $f'$ are envelope equivalent and $f$ satisfies NOM with groups, this is not sufficient to guarantee
that $f'$ also satisfies NOM with groups. For example, there can be a house $h$ s.t.\ $h \in O^{f'}_i((\succ_i,\succ_{g-i}))$ but 
$h \not\in O^f_i((\succ_i,\succ_{g-i}))$, where $h$ lies between the best and worst houses in $O^f_i((\succ_i,\succ_{g-i}))$.
Adding $h$ does not matter when agent $i$'s preference is $\succ_i$, but 
for another preference $\succ'_i$, $h$ can be better, and agent $i$ might have an incentive to declare $\succ_i$ when her 
true preference is $\succ'_i$ in mechanism $f'$, i.e., it is possible that 
$b^{f'}_i(\succ'_i; (\succ_i, \succ_{g-i})) \succ'_i b^{f'}_i(\succ'_i; (\succ'_i,\succ_{g-i}))$ holds.
\end{remark}

Let us define a set of houses called \emph{protected core houses}, which is useful 
to identify the best house obtained by each agent. 
\begin{definition}[Protected core houses]
\label{def:protected}
For $g \in G$ and $\succ_g$, let $P(\succ_g)$ denote the protected core houses, 
which are given by the fixed point of $P^r$ defined as follows. 
\begin{enumerate}
    \item $P^0 = \emptyset$.
    \item $P^{r+1}=P^r \cup \{c_i(\succ_g) \mid c_i(\succ_g) \text{ is } i\text{'s best house within } H\setminus P^r \}$.
\end{enumerate}
\end{definition}
Intuitively, these are the houses assigned to agents who receive their top available choice from their group's internal exchange, meaning they have absolutely no incentive to participate in further trades.
The following theorem shows envelope-equivalence of 
all TTC with super-groups mechanisms.
\begin{theorem}
\label{thm:envelope}
All TTC with super-groups mechanisms are envelope-equivalent. 
Furthermore, $w^f_i(\succ_i;\succ_g)=c_i(\succ_g)$ holds for any TTC with super-groups mechanism 
$f$, for any $g\in G$, $\succ_g \in \Prefs^g$, $i \in g$, $\succ_i \in \Prefs$
\end{theorem}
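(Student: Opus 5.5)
We will prove the theorem by computing the worst and best houses in $O^f_i(\succ_g)$ explicitly for an arbitrary TTC with super-groups mechanism $f$. We aim for expressions that depend only on $\succ_g$, not on $\Supergroups$ or the tie-breaking rule. Envelope-equivalence then follows at once. Fix $g\in G$, $i\in g$ and $\succ_g$. Our candidates are $w^f_i(\succ_i;\succ_g)=c_i(\succ_g)$ for the worst case. For the best case, the candidate is $i$'s most preferred house in $(H\setminus P(\succ_g))\cup\{c_i(\succ_g)\}$.

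\textbf{Worst case.} First we show that every super-group mechanism gives each agent at least her group core house. Because $g$ is a minimal element of $\Supergroups$ among unselected sets containing agents of $g$, no strict sub-super-group is ever processed before $g$. The only super-groups selected before $g$ are disjoint from $g$ or are other minimal sets. A super-group containing $g$ strictly cannot be chosen while $g$ is unselected, since $g\subsetneq\ell$. So when $g$ is processed, its members still hold their original endowments, and TTC on $g$ yields exactly $c(\succ_g)$. Every later TTC step is IR with respect to currently held houses, so each agent's house weakly improves along the process. Hence $f_i(\succ)\succeq_i c_i(\succ_g)$.

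For attainment, let every agent outside $g$ rank her endowment first. Then the later TTC rounds cannot move anything across groups, and within $g$ the core is already Pareto efficient for $g$. We need to check that TTC on a larger set starting from $c(\succ_g)$ leaves $g$'s agents unchanged: any cycle in that TTC involves only agents of $g$, and an improving cycle would contradict efficiency of the core within $g$. So $c_i(\succ_g)\in O^f_i(\succ_g)$ and it is the worst house.

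\textbf{Best case.} We proceed in two steps.

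First, the upper bound. We show by induction on $r$ that houses in $P^{r}$ always stay with their core holders, even in the final allocation. Suppose $c_j(\succ_g)\in P^{r+1}\setminus P^r$. By induction, at every later stage $j$ can only point to houses outside $P^r$. Among those, her current house $c_j(\succ_g)$ is her top choice. So $j$ points to herself and forms a self-loop in every subsequent TTC. Thus agent $i$ can never obtain a house of $P(\succ_g)$ other than her own core house. Her outcome therefore lies in $(H\setminus P(\succ_g))\cup\{c_i(\succ_g)\}$.

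Second, attainment. Let $h^*$ be $i$'s favorite house in that set and suppose $h^*\neq c_i(\succ_g)$. Since $c_i(\succ_g)\notin P$, agent $i$'s top choice outside $P$ is strictly better than her core house, so $h^*\succ_i c_i(\succ_g)$. The house $h^*$ is held after the $g$-stage either by an agent $k\notin g$ or by an agent $k\in g\setminus$ (protected holders). If $k\notin g$, we set $k$'s preferences so that she ranks $c_i(\succ_g)$ first, after her own group's core is formed, arranged by making her group trivial with everyone else keeping endowments. If $k\in g$, we use an outside agent as an intermediary to create a cycle through $i$ and $k$ in the final grand-coalition TTC. This requires a careful construction so that in the grand TTC the top choices of $g$'s agents outside $P$ lead into outsider-held houses. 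Those outsiders can then point back along a chain to close a cycle giving $i$ the house $h^*$.

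The main obstacle is this attainment construction. Intermediate super-groups containing $g$ but not $N$ may execute trades before the grand coalition. We plan to neutralize these by having every outsider in an intermediate super-group point only at houses that preserve the desired structure. Concretely, outsiders rank only houses that are currently held inside $g$, or their own. Then trades in intermediate stages only help realize the cycle, and the resulting house for $i$ is still $h^*$. Since the final expressions depend only on $\succ_g$, all TTC with super-groups mechanisms share the same $b$ and $w$, which proves the theorem.
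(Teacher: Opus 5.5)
Your overall structure matches the paper's. The worst case comes from the $g$-stage reproducing $c(\succ_g)$, with later TTC stages being IR. The upper bound comes from protected core houses never moving. Attainment is split according to who holds $h^*$ after the $g$-stage, and your outsider case is the paper's case (a). The gap is the case $k\in g$, which you leave as a plan that points in the wrong direction.

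You say a ``careful construction'' is needed so that the top choices of $g$'s agents outside $P(\succ_g)$ lead into outsider-held houses. But those preferences are part of the fixed $\succ_g$, so nothing can be constructed there; you must prove they already do. The missing idea is the following path argument.
\begin{itemize}
\item For each $a\in g$ with $c_a(\succ_g)\notin P(\succ_g)$, let $q_a$ be her favorite house in $H\setminus P(\succ_g)$, so $q_a\succ_a c_a(\succ_g)$.
\item Start at $a_0=i$, with $q_i=h^*$. If $q_{a_r}$ is a $g$-house, it equals $c_{a_{r+1}}(\succ_g)$ for a unique $a_{r+1}\in g$. This agent is again unprotected because $q_{a_r}\notin P(\succ_g)$.
\item The path cannot close up inside $g$. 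On such a cycle, giving each $a_r$ the house $q_{a_r}$ would strictly improve all of them using only $g$'s houses. That contradicts the Pareto efficiency of the core $c(\succ_g)$ in $g$'s submarket.
\item Hence the path ends at an outsider $t$ with $e_t=q_{a_m}$. Your case $k\notin g$ is just the length-zero instance of this path.
\end{itemize}

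A single outsider then suffices; no chain of outsiders pointing back is needed. Let $t$ rank $c_i(\succ_g)$ first and $e_t$ second, and let every other outsider find only her own endowment acceptable.

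Drop your proposed neutralization, in which outsiders rank houses held inside $g$ above their own. Such outsiders can form competing cycles with members of $g$ and remove agents or houses on $i$'s path, so it does not ``only help realize the cycle''.

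With all other outsiders self-looping, intermediate stages need no special handling.
\begin{itemize}
\item In any stage containing $g$ but not $t$, the only possible trades are cycles within $g$. These are excluded by efficiency of $c(\succ_g)$, exactly as in your worst-case argument.
\item The cycle $i\to a_1\to\cdots\to a_m\to t\to i$ forms in the first super-group stage containing both $g$ and $t$, not necessarily the grand coalition. In that stage the protected holders self-loop out, and no path agent can be absorbed into another cycle, because each points either into the protected set or to the next agent on the path.
\item You should also note that $i$ keeps $h^*$ in all later stages. Every house she prefers to $h^*$ is protected and never moves.
\end{itemize}
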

\begin{proof}
For fixed $g \in G$ and $\succ_g$, 
we first show that in any TTC with super-groups mechanism, 
for each house $h \in P(\succ_g)$, 
$h$ is obtained by agent $j$ s.t. $c_j(\succ_g) = h$ holds. 
When $g$ is selected, each agent obtains her core allocation. 
Also, in $P(\succ_g)$, there exists at least one agent who prefers her core allocation the most. 
Then, she never points to other houses; she keeps her core allocation. 
Then, there exists another agent who prefers her core allocation the most, except for the
above house. She also keeps her core allocation. By recursively applying the above argument, 
each house in $P(\succ_g)$ is never traded after the TTC within $g$. 

Let us examine the best case. 
For each $i\in g$, 
let us consider the following two cases:
(i) $c_i(\succ_g) \in P(\succ_g)$, and (ii)
$c_i(\succ_g) \not\in P(\succ_g)$.

For case (i), let $h^*$ denote $c_i(\succ_g)$. 
Then, in both mechanisms, she obtains $h^*$. If there exists $h \in H$ such that $h\succ_i h^*$ holds, 
then $h$ is also included in $P(\succ_g)$. 
Thus, she cannot obtain it regardless
of $\succ_{-g}$. Therefore, 
the best case coincides across mechanisms.

For case (ii), agent $i$ cannot obtain any house included in $P(\succ_g)$ in either mechanism, since each agent $j$, where $c_j(\succ_g) \in P(\succ_g)$, 
obtains $c_j(\succ_g)$ in the TTC within $g$ and does not trade afterward. Hence neither $O^f_i(\succ_g)$ nor $O^{f'}_i(\succ_g)$ contains a house in $P(\succ_g)$. Let $h^*$ denote the best house for $i$ under $\succ_i$ among those not in $P(\succ_g)$, and let $j$ be its initial owner. Since $h^*$ is by definition the best house outside $P(\succ_g)$, it follows that $h^* \succeq_i h$ for every $h \in O^f_i(\succ_g)$ and every $h \in O^{f'}_i(\succ_g)$. It therefore remains to show that $h^* \in O^f_i(\succ_g)$ and $h^* \in O^{f'}_i(\succ_g)$ both hold.

There are the following two cases: (a) $j$ is outside of $g$, or (b) 
$j$ is inside of $g$. 

For case (a), 
    let us assume, for each agent outside of $g$ (except for $j$), 
    her initial endowment is her only acceptable house.
    Also assume agent $j$ prefers 
    $c_i(\succ_g)$ over her initial endowment $e_j$ (and other houses are unacceptable).
    Then, 
    every outsider except for $j$ points only to her own house, so 
    trades occur only within $g$ before $i$ and $j$ are 
    selected together.
First, $i$ obtains $c_i(\succ_g)$ by TTC within $g$. 
Then, $i$ obtains $h^*$ when they are selected together, so $h^* \in O^f_i(\succ_g)$; the same construction applied to $f'$, gives $h^* \in O^{f'}_i(\succ_g)$.

For case (b), 
let us assume each agent outside of $g$ (except for $k$, which will be specified later) prefers her own initial endowment.
Trades only occur within $g$ before $i$ and $k$ are selected together.
Note that in any intermediate super-group stage that mixes 
$g$ with other base groups, 
every outsider present in that stage points to her own house, so no cross-group trades occur in those stages.
Then, consider the pointing path starting from $i$; it starts from $i$,
    and first points to the current owner of $h^*$ (it can be $j$ or another agent in $g$). Eventually, the path goes outside of $g$.
    If this does not happen, then there exists an exchange cycle within $g$ that strictly improves
    $c(\succ_g)$, which contradicts the fact that $c(\succ_g)$ is the core allocation within $g$.
    Let us choose $k$ as the first agent on the path who is outside of $g$. 
    Assume $k$ prefers $c_i(\succ_g)$ over her initial endowment
    (and other houses are unacceptable).
    Then, $i$ gets $h^*$, so $h^* \in O^f_i(\succ_g)$; applying the same construction to $f'$ gives $h^* \in O^{f'}_i(\succ_g)$.

In both cases, $h^*$ is exactly the best house $i$ can obtain, i.e., $h^* = b^f_i(\succ_i;\succ_g) = b^{f'}_i(\succ_i\nobreak;\succ_g\nobreak)$.

Next, let us examine the worst-case. 
The TTC within $g$ is first applied to agents in $g$,
and each agent $i$ obtains $c_i(\succ_g)$. Then, the assignment of agent $i$ weakly improves 
afterward. 
Thus, for each $h \in O^f_i(\succ_g)$, $h\succeq_i c_i(\succ_g)$ holds. 
Let us consider the case that for each agent $j \not\in g$, her initial endowment 
$e_j$ is the only acceptable house. 
Then, there will be no trade afterward. Thus, $c_i(\succ_g) \in O^f_i(\succ_g)$ holds. 
Thus, $w^f_i(\succ_i;\succ_g) = c_i(\succ_g)$ holds. 
\end{proof}

Next, we show the properties that any TTC with super-groups mechanism satisfies. 
\begin{theorem}
\label{thm:ttc-with-super-groups}
Any TTC with super-groups mechanism is GR, PE, and NOM with groups.    
\end{theorem}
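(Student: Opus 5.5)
We establish the three properties separately, in the order GR, PE, NOM with groups. The first two are short consequences of how TTC with super-groups runs, and the third rests on the envelope description in Theorem~\ref{thm:envelope}.

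\textbf{GR.} Each base group $g$ is a minimal element of $\Supergroups$ among those containing its agents. Before $g$ is selected, every super-group processed is either disjoint from $g$ or a strict subset of $g$. A strict subset is impossible, because super-groups are unions of base groups and base groups are disjoint. So when $g$ is processed, its agents still hold their initial endowments, and TTC within $g$ gives each $i\in g$ exactly $c_i(\succ_g)$. Every later TTC stage is individually rational with respect to currently held houses. Hence each agent's house weakly improves afterward, and $f_i(\succ)\succeq_i c_i(\succ_g)$.

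\textbf{PE.} By condition (iii), $N\in\Supergroups$, and $N$ is processed last, since every other super-group is a strict subset of $N$. The final stage is TTC on the whole market with the current holdings as endowments. TTC outcomes are Pareto efficient relative to any endowment, so the final allocation is PE at $\succ$.

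\textbf{NOM with groups.} Fix $g$, $i\in g$, true $\succ_i$, a misreport $\succ'_i$, and $\succ_{g-i}$.

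For the worst case, Theorem~\ref{thm:envelope} gives $w^f_i(\succ_i;(\succ_i,\succ_{g-i}))=c_i(\succ_i,\succ_{g-i})$. Take the outsider profile in which every agent outside $g$ finds only her endowment acceptable. Under the misreport, $i$ then ends with $c_i(\succ'_i,\succ_{g-i})$. TTC within $g$ is strategy-proof, so $c_i(\succ_i,\succ_{g-i})\succeq_i c_i(\succ'_i,\succ_{g-i})$. Since $w^f_i(\succ_i;(\succ'_i,\succ_{g-i}))$ is the $\succ_i$-worst element of a set containing $c_i(\succ'_i,\succ_{g-i})$, it is weakly worse than that house, which proves the worst-case inequality.

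For the best case, the proof of Theorem~\ref{thm:envelope} shows that under truth-telling the best case equals the $\succ_i$-best house outside $P(\succ_i,\succ_{g-i})$. Call this house $h^*$. It therefore suffices to show that under any report $\succ'_i$, agent $i$ can never obtain a house $h\succ_i h^*$. Any such $h$ lies in $P(\succ_g)$ for the truthful $\succ_g=(\succ_i,\succ_{g-i})$, and $h\neq c_i(\succ_g)$ (if $c_i(\succ_g)\in P(\succ_g)$, then $h^*$ is simply $i$'s core house or better, and the argument below still covers the houses above it).

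The key step is to show by induction on the rounds $r$ of the construction of $P$ that the houses in $P^r$ are assigned in the TTC within $g$ to the same agents $j\neq i$ under the report $\succ'_i$ as under truth-telling, and are never traded afterward. The inductive step runs as follows. An agent $j\neq i$ whose core house is her top choice outside $P^r$ points to it in every TTC round until it is removed. Tracing the trading cycle in TTC, that cycle involves only the agents in $g\setminus\{i\}$ together with houses whose owners are determined independently of $i$'s report, and $i$'s report can only alter pointers from $i$ herself. Once such a $j$ holds her top available house, she never trades again in later stages.

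I expect this inductive invariance to be the main obstacle. If it proves difficult, I would instead use the recursive characterization of TTC: the agents in the first cycles of truthful TTC within $g$ who receive top choices are fixed points. I would also check whether $i$ might herself belong to a cycle that creates $P^r$; in that case the houses she displaces are worse for her than her own core house, so they cannot exceed $h^*$ anyway. Once the invariance is established, $O^f_i(\succ'_i,\succ_{g-i})$ avoids all houses strictly $\succ_i$-better than $h^*$, which gives the best-case inequality.
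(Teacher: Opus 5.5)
Your arguments for GR, PE and the worst case follow the paper's. Your worst-case step, which exhibits $c_i(\succ'_i,\succ_{g-i})\in O^f_i((\succ'_i,\succ_{g-i}))$ and then invokes strategy-proofness of TTC within $g$, is actually tighter than the paper's. The gap is in the best case: the invariance you make the key step is false.

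Take $N=\{i,j,k\}$, $G=\{\{i,j\},\{k\}\}$, $e_k\succ_i e_j\succ_i e_i$ and $e_i\succ_j e_j\succ_j e_k$. Truthfully, $i$ and $j$ swap within $g$. So $c_j(\succ_g)=e_i$ is $j$'s top house and $e_i\in P^1$, while $c_i(\succ_g)=e_j\notin P(\succ_g)$. If $i$ reports $e_i$ first, the TTC within $g$ leaves $e_i$ with $i$. This breaks your argument in three places:
\begin{itemize}
\item A round-one protected house changes hands.
\item The cycle that created it contains $i$, contrary to your claim that it involves only $g\setminus\{i\}$.
\item Your statement that ``$i$'s report can only alter pointers from $i$ herself'' overlooks that her report changes which cycles clear, and hence what everyone else points to in later rounds.
\end{itemize}
Your fallback (such houses are worse for $i$ than $c_i(\succ_g)$) does not repair the induction. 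The step for round $r+1$ uses the hypothesis on all of $P^r$, because the recipient of a round-$(r+1)$ house may rank houses of $P^r$ above it. Once one of these is no longer pinned down, nothing stops her from trading. Also, an agent whose core house is her best outside $P^r$ does not point to it ``in every TTC round''; she does so only after the houses of $P^r$ she prefers have been cleared.

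The missing idea is to run the induction only over the set $Q$ of protected houses whose truthful recipients leave the TTC within $g$ strictly before $i$.
\begin{itemize}
\item Every $h\succ_i h^*$ lies in $Q$: it is a core house, so it is initially owned by a member of $g$, and $h\succ_i c_i(\succ_g)$, so it must have been cleared before $i$'s round.
\item $Q$ is closed in the way the induction needs. If $c_j(\succ_g)\in Q\cap(P^{r+1}\setminus P^r)$, every house $j$ prefers to it lies in $P^r$. Such a house is therefore not an outsider's house, so it was cleared before $j$'s round, so it lies in $Q\cap P^r$.
\item You also need the standard TTC fact that every cycle cleared before $i$'s truthful round clears identically under any report of $i$. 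To see this, note that agents whose truthful pointing path leads to $i$ cannot leave before $i$. One then checks round by round that any cycle formed under the misreport before that round is either a truthful cycle or consists only of such agents.
\end{itemize}
Induction on $r$ then shows that each house of $Q$ is held by its truthful recipient $j\neq i$ from the end of the TTC within $g$ onward, because everything $j$ prefers is permanently held by someone else. This yields the best-case inequality. The paper's own proof compresses this entire step into an appeal to strategy-proofness of TTC. Your proposal tries to make it explicit, but with an invariant that is too strong to be true.
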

\begin{proof}
Let $f$ be a TTC with super-groups mechanism. 
From Theorem~\ref{thm:envelope}, $c_i(\succ_g) =$ $w^f_i(\succ_i\nobreak;\succ_g\nobreak)$ holds. 
Thus, TTC with super-groups is GR. 
Also, since TTC with super-groups applies TTC globally in the last step, 
and TTC is PE, TTC with super-groups also satisfies PE.

We are going to show that TTC with super-groups satisfies NOM with groups. First, let us examine the worst case. As shown in the proof of Theorem~\ref{thm:envelope}, 
the TTC within $g$ is applied, and each agent $i$ first obtains $c_i(\succ_g)$. 
In the worst case, there will be no trade afterward.
Since TTC is strategy-proof, agent $i$ cannot improve her assignment in 
the TTC within $g$. Thus, no manipulation can improve this worst-case. 

Second, let us examine the best case. 
Let $h^*$ denote agent $i$'s best-case house under truthful reporting,
as characterized in Theorem~\ref{thm:envelope}.
Suppose that agent $i$ obtains $h$ such that $h\succ_i h^*$ by misreporting. 
By the definition of $h^*$, such a house must belong to
$P(\succ_g)$.
However, the construction of $P(\succ_g)$ implies that a protected
core house that is strictly preferred by $i$ to $h^*$ cannot become
available to $i$ through a change in $i$'s report (otherwise, there exists a profitable manipulation in TTC).
Hence no misreport can improve the best case.
\end{proof}

We show the envelope equivalence of all mechanisms 
satisfying GR, PE, and NOM with groups. 
\begin{theorem} \label{thm:characterization}
If a mechanism $f$ satisfies GR, PE, and NOM with groups, then $f$ 
is envelope-equivalent to every 
TTC with super-groups mechanism. 
\end{theorem}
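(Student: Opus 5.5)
By Theorem~\ref{thm:envelope}, every TTC with super-groups mechanism has the envelope
\[
w_i=c_i(\succ_g), \qquad b_i=h^*,
\]
where $h^*$ is $i$'s $\succ_i$-best house in $H\setminus P(\succ_g)$ if $c_i(\succ_g)\notin P(\succ_g)$, and $h^*=c_i(\succ_g)$ otherwise. So it suffices to show that any $f$ satisfying GR, PE, and NOM with groups has this envelope. We fix $g\in G$, $i\in g$ and $\succ_g$, and treat the worst and best cases separately, following the structure of the proof of Theorem~\ref{thm:envelope} but now using only the axioms.

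\emph{Worst case.} By GR, every $h\in O^f_i(\succ_g)$ satisfies $h\succeq_i c_i(\succ_g)$. For the reverse inequality we choose $\succ_{-g}$ so that each outsider ranks her endowment first with everything else unacceptable. GR applied to each singleton-or-larger outside group implies IR for outsiders, so no outsider trades. The agents of $g$ therefore reallocate $g$'s houses only, and GR forces each $j\in g$ to get at least $c_j(\succ_g)$. Since the core allocation of $g$ is Pareto efficient within $g$ and every other allocation of $g$'s houses that gives all of $g$ weakly more must coincide with it (strict preferences plus core uniqueness), $f_g=c(\succ_g)$. Hence $c_i(\succ_g)\in O^f_i(\succ_g)$, and $w^f_i(\succ_i;\succ_g)=c_i(\succ_g)$.

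\emph{Best case, upper bound.} We show that no house in $P(\succ_g)$ other than $c_i(\succ_g)$ is ever obtained by $i$. We argue by induction on the rounds $P^r$ of Definition~\ref{def:protected}. If $c_j(\succ_g)\in P^{r+1}\setminus P^r$, then it is $j$'s best house outside $P^r$. By induction, the houses in $P^r$ go to their core owners, so $j$ cannot get anything in $P^r$, and GR gives $f_j\succeq_j c_j(\succ_g)$. Hence $f_j(\succ)=c_j(\succ_g)$ for every $\succ_{-g}$. It follows that $O^f_i(\succ_g)$ avoids $P(\succ_g)\setminus\{c_i(\succ_g)\}$, so $b^f_i\preceq_i h^*$ in both cases (i) and (ii).

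\emph{Best case, attainment (main obstacle).} In case (i) $h^*=c_i(\succ_g)$ is already attained by the worst-case profile. In case (ii) we need some $\succ_{-g}$ at which $f_i=h^*$. Reusing the profiles from Theorem~\ref{thm:envelope} (an outsider $k$ who accepts only $c_i(\succ_g)$ besides her endowment, all other outsiders accepting only their endowments) does not directly pin down $f$: PE and GR only force some Pareto improvement over $c(\succ_g)$ involving $k$, and several trading cycles might be available. The plan is to first restrict the profile so much that GR+PE leave exactly one allocation.

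We take the chain $i\to j_1\to\cdots\to k$ of the construction in Theorem~\ref{thm:envelope}, where each agent in $g$ points to the current core owner of the house she wants. We let $k$ accept only $c_i(\succ_g)$, and use an induction on the length of the chain or on $|H\setminus P(\succ_g)|$. If uniqueness fails, we fall back on NOM with groups for $i$. We consider the misreport $\succ'_i$ that ranks $h^*$ first, then $c_i(\succ_g)$. We argue, via GR and the within-$g$ core under $(\succ'_i,\succ_{g-i})$ and the induction hypothesis, that $h^*\in O^f_i(\succ'_i,\succ_{g-i})$. NOM then gives $b^f_i(\succ_i;\succ_g)\succeq_i h^*$. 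Combined with the upper bound, this yields $b^f_i=h^*$ and hence envelope equivalence. The delicate point is checking that the misreport does not alter the core allocations or the protected set of the other members of $g$ in a way that invalidates the construction. I expect this bookkeeping to be the hardest part.
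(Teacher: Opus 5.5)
\textbf{Where you match the paper, and where the gap is.} Your worst-case argument and your induction over the rounds $P^r$ for the upper bound match the paper's proof. For attainment you also arrive at the paper's device: the misreport $\succ'_i$ ranking $h^*$ first and $c_i(\succ_g)$ second, followed by $b^f_i(\succ_i;\succ_g)\succeq_i b^f_i(\succ_i;\succ'_g)\succeq_i h^*$ from NOM with groups. That direct chain of inequalities is fine; the paper phrases the same step as a contradiction. The gap is that the claim carrying all the weight, $h^*\in O^f_i(\succ'_i,\succ_{g-i})$, is only asserted ``via GR \ldots\ and the induction hypothesis.'' You never state the induction, and none is needed. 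Treating NOM as a fallback ``if uniqueness fails'' also misplaces its role. Under truthful reports uniqueness genuinely fails in general, as you note yourself. NOM is always the step that transfers attainability from $\succ'_i$ back to $\succ_i$. Finally, the protected set under $\succ'_g$ plays no role, so that part of your ``bookkeeping'' worry is moot.

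\textbf{The missing argument.} Build the path $i=a_0,a_1,\ldots,a_m,t$ as follows. Let $q_{a_r}$ be $a_r$'s best house in $H\setminus P(\succ_g)$; then $q_{a_r}\succ_{a_r}c_{a_r}(\succ_g)$ because $P(\succ_g)$ is a fixed point. Let $a_{r+1}$ be the core recipient of $q_{a_r}$, and let $t\notin g$ be the initial owner of $q_{a_m}$. The path must exit $g$, since otherwise a cycle would improve on the core. Let $t$ rank $c_i(\succ_g)$ first and $e_t$ second, and let every other outsider accept only her endowment. Four steps then finish the proof.
\begin{enumerate}
\item The misreport leaves the core unchanged: $c(\succ'_g)=c(\succ_g)$. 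The only house $i$ ranks above $c_i(\succ_g)$ under $\succ'_i$ is $h^*$, and $h^*\succ_i c_i(\succ_g)$. So any coalition blocking $c(\succ_g)$ under $\succ'_g$ also blocks it under $\succ_g$.
\item GR at $(\succ'_g,\succ_{-g})$ therefore forces $f_i\in\set{h^*,c_i(\succ_g)}$. This is what removes the multiplicity you worried about: you need only exclude one alternative for $i$, not pin down the whole allocation.
\item Suppose $f_i=c_i(\succ_g)$. Then $t$ cannot get $c_i(\succ_g)$, so by GR she gets $e_t$, and all other outsiders keep their endowments. Hence $g$'s houses stay within $g$. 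GR together with Pareto efficiency of the within-$g$ core then force exactly $c(\succ_g)$ on $g$.
\item Now reassign $q_{a_r}$ to each $a_r$ and $c_i(\succ_g)$ to $t$. Everyone on the path strictly gains; for $i$ this holds because $h^*$ is her top house under $\succ'_i$. This contradicts PE.
\end{enumerate}
So $f_i=h^*$ at this profile, giving $h^*\in O^f_i(\succ'_i,\succ_{g-i})$, and NOM with groups completes the proof.
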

\begin{proof}
It is sufficient to show that for every mechanism $f$ satisfying GR, PE, and 
NOM with groups, 
$w^f_i(\succ_i;\succ_g) = c_i(\succ_g)$ and $b^f_i(\succ_i;\succ_g)=h^*$,
where $h^*$ is $i$'s best house within $H\setminus P(\succ_g)$ when 
$c_i(\succ_g) \not\in P(\succ_g)$. Otherwise, $h^* = c_i(\succ_g)$.

For the worst case, assume each agent outside of $g$ considers only her own initial endowment acceptable. 
Then, allocating $c(\succ_g)$ for agents within group $g$ is the only GR outcome. 
Thus, $c_i(\succ_g) \in O^f_i(\succ_g)$ holds. 
Also, from GR, any outcome in $O^f_i(\succ_g)$ must be weakly better than $c_i(\succ_g)$. Thus, $w^f_i(\succ_i;\succ_g) = c_i(\succ_g)$ holds. 

For the best case, first note that GR fixes every protected core
house to its core recipient. Indeed, this follows by induction on
the construction of $P(\succ_g)$: each of the first protected agents ranks
her core house first among all houses, and after the previously
protected houses have been fixed, the same argument applies to the
next protected agents.
If $c_i(\succ_g)\in P(\succ_g)$, then
$h^*=c_i(\succ_g)$, and GR implies that agent $i$ always obtains
$h^*$. Hence the desired best-case equality holds.

Suppose, therefore, that
$c_i(\succ_g)\notin P(\succ_g)$.
Then, agent $i$ cannot obtain any house in $P(\succ_g)$ under GR.
Hence
\[
b_i^f(\succ_i;\succ_g)\preceq_i h^*,
\]
where $h^*$ is the most preferred house of $i$ in
$H\setminus P(\succ_g)$.
Moreover, by the definition of $P(\succ_g)$,
\[
h^*\succ_i c_i(\succ_g).
\]

It remains to show that $h^*\in O_i^f(\succ_g)$.
Suppose, for contradiction, that
$h^*\notin O_i^f(\succ_g)$.
Define $\succ'_i$ by ranking $h^*$ first and
$c_i(\succ_g)$ second, while preserving the relative order of all
remaining houses.
For $\succ'_g = (\succ'_i, \succ_{g-i})$, $c(\succ'_g) = c(\succ_g)$ holds.\footnote{Note that $h^* \succ_i c_i(\succ_g)$: the worst-case argument gives $c_i(\succ_g) \in O^f_i(\succ_g)$, we have shown $h^* \succeq_i h$ for every $h \in O^f_i(\succ_g)$, and $h^* \neq c_i(\succ_g)$. Hence $\set{h : h \succ'_i c_i(\succ_g)} = \set{h^*} \subseteq \set{h : h \succ_i c_i(\succ_g)}$, so any coalition $S \subseteq g$ blocking $c(\succ_g)$ under $\succ'_g$ would also block it under $\succ_g$; since the core is unique, $c(\succ'_g) = c(\succ_g)$ holds.}

Let $a_0=i$. For each agent $a_r\in g$ such that
$c_{a_r}(\succ_g)\notin P(\succ_g)$, let $q_{a_r}$ denote her
most preferred house in $H\setminus P(\succ_g)$.
By the definition of $P(\succ_g)$,
\[
q_{a_r}\succ_{a_r} c_{a_r}(\succ_g).
\]
In particular, $q_i=h^*$.

If $q_{a_r}$ is initially owned by an agent outside of $g$, stop
the path and let $t$ denote its initial owner.
Otherwise, since $c(\succ_g)$ is an allocation of the houses
initially owned by agents in $g$, there exists a unique
$a_{r+1}\in g$ such that
\[
q_{a_r}=c_{a_{r+1}}(\succ_g).
\]
Moreover, $c_{a_{r+1}}(\succ_g)\notin P(\succ_g)$.
Repeating this procedure must eventually lead to an agent
$t\notin g$. Otherwise, the path would contain a cycle within $g$,
and assigning each agent $a_r$ on this cycle the house $q_{a_r}$
would strictly improve every agent on the cycle relative to
$c(\succ_g)$, contradicting the Pareto efficiency of the core
allocation.

Choose $\succ_t$ such that $c_i(\succ_g)$ is her most preferred
house and her initial endowment $\Endowment_t$ is second.
For every other agent outside of $g$, let her initial endowment be
her only acceptable house.

Suppose that $f_i(\succ'_g,\succ_{-g})\neq h^*$.
Since $h^*$ and $c_i(\succ_g)$ are respectively the first and
second choices of $i$ under $\succ'_i$, GR implies
\[
f_i(\succ'_g,\succ_{-g})=c_i(\succ_g).
\]
Hence agent $t$ cannot obtain $c_i(\succ_g)$ and, by GR, obtains
$\Endowment_t$. Every other agent outside of $g$ also obtains her
initial endowment. Therefore all houses initially owned by agents
in $g$ remain within $g$. Since every agent in $g$ must weakly
prefer her assignment to $c(\succ_g)$ and $c(\succ_g)$ is Pareto
efficient within $g$, the allocation within $g$ must be
$c(\succ_g)$.

However, consider the reallocation along the path
\[
i=a_0,a_1,\ldots,a_m,t.
\]
Assign $q_{a_r}$ to each $a_r$ and assign $c_i(\succ_g)$ to $t$.
Every agent on the path strictly improves, while all other agents
keep the same house. This contradicts PE. Hence
\[
f_i(\succ'_g,\succ_{-g})=h^*,
\]
and therefore $h^*\in O_i^f(\succ'_g)$.

Here, $\succ'_g = (\succ'_i, \succ_{g-i})$ differs from $\succ_g$ only in agent $i$'s declared preference. 
Since we assume $h^* \notin O^f_i(\succ_g)$, 
$b^f_i(\succ_i;\succ_g) \prec_i h^*$ holds. 
Also, since $h^* \in O^f_i(\succ'_g)$, 
$b^f_i(\succ_i; \succ'_g) \succeq_i h^*$ holds. 
Hence $b^f_i(\succ_i;\succ'_g) \succ_i b^f_i(\succ_i;\succ_g)$: declaring $\succ'_i$ strictly improves agent $i$'s best case under her true preference $\succ_i$, contradicting NOM with groups.
\end{proof}

\begin{remark}
\autoref{thm:characterization} states a necessary condition: every mechanism satisfying GR, PE, and NOM with groups is envelope-equivalent to every 
TTC with super-groups mechanism. The converse does not hold. By Remark~\ref{rmk:envelope-not-nom}, a mechanism sharing this envelope need not itself satisfy NOM with groups.
\end{remark}

\section{Conclusions}
\label{sec:conclusions}
We studied how the Shapley--Scarf housing market changes when strategy-proofness is replaced by a weaker, best- and worst-case incentive requirement. Ordinary NOM turns out to be too weak in this domain: IR and PE already imply it. This motivates NOM with groups, in which agents evaluate misreports by varying only the preferences of agents outside their group. The condition ranges from ordinary NOM, when groups are singletons, to strategy-proofness, when the whole market is one group.

We also added GR, requiring every group to be protected by its own internal core allocation. 
We then introduced TTC with super-groups as a family of mechanisms that integrate groups in stages. These mechanisms satisfy GR, PE, and NOM with groups.
All TTC with super-groups mechanisms have the same envelope, and our main result shows that any mechanism satisfying GR, PE, and NOM with groups must induce exactly this envelope.

The characterization is deliberately at the level of envelopes. NOM with groups depends only on best and worst outcomes, so it cannot identify the full allocation rule on profiles that generate intermediate outcomes. This leaves several directions for future work. One is to add further axioms that distinguish among mechanisms sharing the same envelope. Another is to study randomized mechanisms, weak preferences, or richer endowment structures. Finally, the connection to kidney exchange suggests studying staged market integration under additional constraints such as cycle-length limits, compatibility restrictions, or dynamic participation.

\subsubsection*{Acknowledgments}
Initial ideas for this work were obtained during 
the group work at 
Fair and Explainable Collective Decision Workshop 2026.
The authors are grateful to the workshop organizers,
Institut Pascal at Université Paris-Saclay,
and 
the program ``Investissements d’avenir'' ANR-11-IDEX-0003-01.
Yokoo is partially supported by JST ERATO Grant Number JPMJER2301 and
JSPS KAKENHI Grant Number 25K03186. 
\newpage
\bibliographystyle{ecta}
\bibliography{references}
\end{document}